\documentclass{article}

\usepackage{PRIMEarxiv}

\usepackage[utf8]{inputenc} 
\usepackage[T1]{fontenc}    
\usepackage{hyperref}       
\usepackage{url}            
\usepackage{booktabs}       
\usepackage{amsfonts}       
\usepackage{nicefrac}       
\usepackage{microtype}      
\usepackage{lipsum}
\usepackage{fancyhdr}       
\usepackage{graphicx}       

\usepackage[export]{adjustbox}

\usepackage[english]{babel}

\usepackage{doi}

\usepackage{comment}  

\usepackage{amsmath} 

\usepackage{amssymb} 

\usepackage{amsthm}

\usepackage[ruled,linesnumbered]{algorithm2e}

\usepackage{float}

\usepackage{subcaption}

\usepackage{silence}

\usepackage{dirtytalk}

\bibliographystyle{plainurl}
\newtheorem{theorem}{Theorem}[section]
\newtheorem{lemma}[theorem]{Lemma}
\newtheorem{definition}[theorem]{definition}

\graphicspath{{media/}}     

\pagestyle{fancy}
\thispagestyle{empty}
\rhead{ \textit{ }} 

\fancyhead[LO]{eVABA}

\title{An approach to optimizing the VABA protocol using $\kappa$-size committee}

\author{ {Nasit S Sony} \\
	University of California, Merced\\
	CA 95340, USA \\
	\texttt{nsony@ucmerced.edu} \\
}

\begin{document}
\maketitle

\begin{abstract}

Byzantine agreement protocols in asynchronous networks have gained renewed attention due to their independence from network timing assumptions to ensure termination. Traditional asynchronous Byzantine agreement protocols require every party to broadcast its requests (e.g., transactions), leading to high communication costs as parties ultimately agree on one party's request. This inefficiency is particularly significant in multi-valued Byzantine agreement protocols, where parties aim to agree on one party's requests under the assumption $n=3f+1$, where $n$ is the total number of parties, and $f$ is the number of Byzantine parties.

To address these inefficiencies, we propose Efficient-VABA (eVABA), an optimized protocol for the asynchronous Byzantine agreement (ABA) problem. By limiting broadcasts to a selected subset of parties, the protocol reduces the number of messages and computation overhead.

\end{abstract}

\keywords{ Blockchain, Distributed Systems, Byzantine Agreement, System Security}

\section{Introduction}

 Byzantine agreement (BA) is a fundamental problem in computer systems introduced by Lamport, Pease, and Shostak in their pioneering works \cite{BYZ22,BYZ23}. BA assumes a system where multiple computers try to agree on a value, but some of the computers might be byzantine (whose behavior is arbitrary, unpredictable, and malicious). We will use machine, party, and node synonymously. Many models with numerous system assumptions have been presented since the seminal work \cite{BYZ23} to solve the BA problem.

 Bitcoin \cite{BITCOIN01} has refueled the interest in the study of the Byzantine agreement problem. From the literature, we see the discussions of the three widely used models: 1) an asynchronous model, where every party broadcasts its requests; 2) a partially synchronous model, where the network becomes synchronous after an asynchronous period, and a party is responsible for handling the client requests; and 3) a synchronous model. But, the synchronous or partially synchronous protocols perform well when the message delivery is guaranteed in a time bound, and the number of parties is small. The problem with these protocols is that they run around a leader, and the leader can be faulty. Though there is a mechanism that can change a faulty leader and elect a new leader, Miller et al. \cite{HONEYBADGER01} prove that a protocol in a partially synchronous model is unable to make progress in \textit{intermittently synchronous network}.


The byzantine agreement works in the presence of byzantine faults, and asynchronous protocols always let every party propose its requests. Miller et al. \cite{HONEYBADGER01} provides an atomic broadcast protocol that outputs multiple parties' requests at a time, which is followed by \cite{FASTERDUMBO,PMVBA, OHBBFT,cMVBA,SlimABC}. The atomic broadcast protocol lets the parties agree on multiple parties' requests and incurs a high communication complexity. But if different parties broadcast the requests and have duplication, then allowing every party proposes each time is not beneficial in terms of message and communication complexity. Therefore, we take the question :

\textit{Is it possible to reach an agreement without allowing every party to broadcast its requests? Does it provide any improvement over the existing protocols? }

\subsection{Challenges and Proposed Solutions}
To design a protocol that reduces the number of messages and computational overhead, we address the following key challenges:
\begin{enumerate}
    \item \textbf{Determining Necessary Broadcasts:} How many broadcasts are essential to maintain protocol progress?
    \item \textbf{Selection of Broadcasting Parties:} How can we select a subset of parties for broadcasting?
    \item \textbf{Preventing Dishonest Broadcasts:} How can we prevent non-selected dishonest parties from broadcasting?
    \item \textbf{Dispersing Broadcasts:} How can the broadcast messages efficiently reach all parties?
    \item \textbf{Adapting Leader Election:} How can we adapt the leader election process to ensure leaders are chosen from the selected subset?
\end{enumerate}

To address the above-mentioned challenges, we have done the following analysis and propose a solution based on the analysis.

\begin{itemize}
    \item \textbf{Analysis of Adversarial Behavior:} A dishonest party or adversary can read and delay messages from honest parties but must eventually deliver them. Therefore, to ensure protocol progress, we employ a committee that ensures the inclusion of an honest party.
    \item \textbf{Enhancements to the Provable-Broadcast Protocol:}. To stop a dishonest party from broadcasting its proposal, we introduce a security check in the provable-broadcast protocol. Honest parties only respond to proposals from selected parties, preventing dishonest parties from completing the first step of the proposal-promotion sub-protocol.
    \item \textbf{Efficient Broadcast Handling:} In asynchronous protocols, the standard practice is to wait for $n-f$ broadcasts to ensure progress, given that $f$ parties may be faulty. However, if only one honest party broadcasts, other parties must respond upon receiving the first broadcast to maintain liveness. We adapt this behavior to align with the reduced broadcasting subset.
    \item \textbf{Adapting Leader Election for Selected Parties:}The standard leader election protocol is insufficient as it may elect any party, including non-selected ones, as a leader. To address this, we combine the leader election protocol with a mapping mechanism that ensures the elected leader belongs to the selected committee. This adaptation guarantees that the leader has been chosen from the set of vetted parties, maintaining the integrity of the protocol.
\end{itemize}

\subsection{Our Contribution}
We design a protocol that improves the efficiency of the existing protocols. Our two main observations are that we can reduce the number of proposals, and the reduction helps to design an efficient protocol. We apply our reduction technique to design a validated asynchronous byzantine agreement (VABA) protocol.

\begin{itemize}
    \item Integrate the committee selection protocol that ensures the inclusion of at least one honest party with high probability, enhancing resilience against Byzantine failures.
    \item Propose eVABA, reducing communication rounds by selectively broadcasting proposals.
    \item Integrate a leader election protocol and introduce the mapping of the leader to a committee member to achieve an agreement on a proposal by parties.
    \item Provide theoretical analysis showing the effectiveness and the correctness of the protocol.
\end{itemize}

\section{The Design of eVABA}
The objective of our design is to minimize the number of messages and the related computations in a VABA execution. The asynchronous byzantine agreement protocols let every party propose their requests (a request is a value $v$), and at the end of the protocol, parties agree on one party's proposed request. As illustrated in the Introduction, if the VABA protocol lets $n$ parties broadcast their requests, then the parties agree on a completed broadcast (decide on a broadcast) with $\frac{2}{3}$ probability. However, we also observe that if parties are unable to decide in one view, then the parties adopt the elected parties' requests in the next view. Therefore, if we consider the probability of a completed broadcast in the first view, then we can select a subset of parties as a proposer in each view and minimize the number of messages and computations. The proposed protocol contains five components, each described in the subsections below. Figure \ref{fig:E-VABA} provides an overview of the protocol.
\begin{figure}[ht]
    \centering
    \includegraphics[width=0.8352\textwidth,height=0.2376\textwidth]{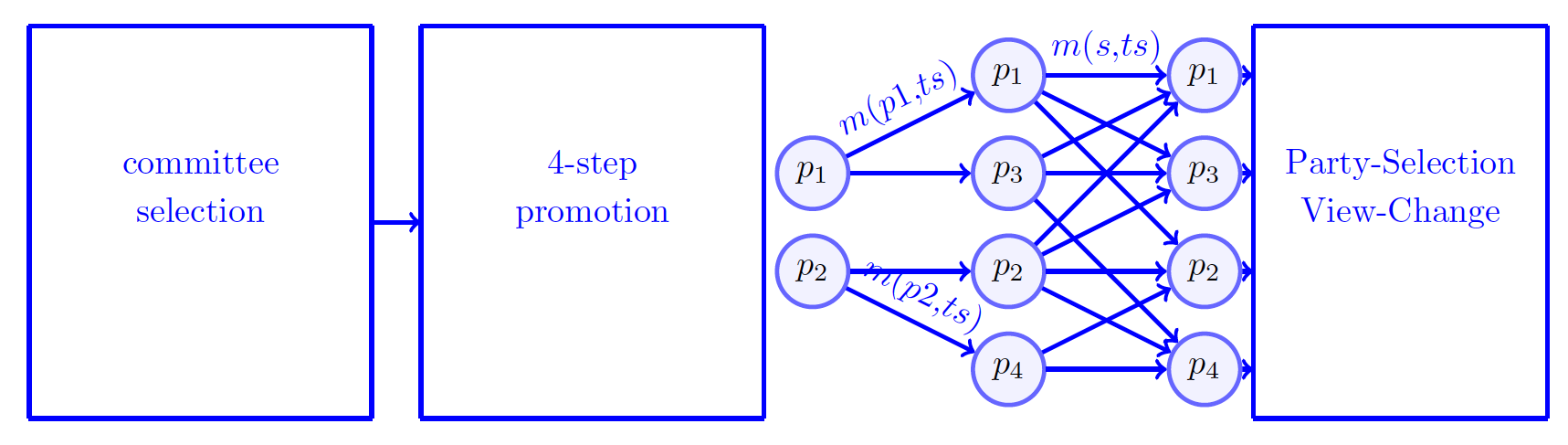}
    \caption{eVABA illustration. Parties select a committee, and each committee member promotes its requests using the 4-step promotion. Then, a party proposes the proposal $p_i$ with proof $ts$. After receiving a proposal, each party suggests the proposal and waits for $\langle n-f \rangle$ suggestions. Then, the selected parties move to the party-selection and view-change step.}
    \label{fig:E-VABA}
\end{figure}

\subsection{Committee Selection}
To ensure the progress of the protocol, we must select a subset of parties in a way that guarantees at least one honest party within the group. We employ a standard committee selection protocol, setting $\kappa$ as the security parameter, bounded by $n$. The selection process is randomized, ensuring that at least one selected party is honest with overwhelming probability.

\subsection{Proposal-Promotion}
The eVABA protocol works in a view-based manner, and in each view, at least one party is required to complete its 4-step broadcast and collect the related proof. Therefore, each selected party broadcasts its proposal in 4-steps. Each step contains a two-step broadcast that generates proof for that step. Therefore, we call this $provable broadcast$. Each step of the proposal promotion contains a provable broadcast, and the deliverables of steps are $key$, $lock$, and $commit$. Since the eVABA allows only the committee members to broadcast their requests, we adopted a modified version of the $provable-broadcast$ and named it prioritized provable-broadcast. We provided the properties and details about the protocol below.

\paragraph{Prioritized probable broadcast (P-PB)\label{P-PB-D}} 

While $\kappa$ parties are chosen as promoters, a non-selected Byzantine party may attempt to promote its proposal and disrupt the network by consuming bandwidth. As a result, the standard $provable-broadcast$ sub-protocol cannot be used directly for promoting requests. To address this, we introduce a modified version of $provable-broadcast$, called the Prioritized-Provable Broadcast (P-PB). In this modified protocol, a party only responds to broadcasts from selected committee members, effectively preventing Byzantine parties from promoting unauthorized requests. The construction and components of P-PB are detailed in Algorithm \ref{PB:P-PB1}.
$provable-broadcast$ for the selected parties can be obtained through the same threshold-signature scheme. Generally, in $provable-broadcast$, a party adds sign-share to a request if the request is authentic or sent by some party. But we design a protocol where a party adds sign-share to a request only if the request is from a selected party for the particular view. Therefore, only the selected parties receive $\langle n-f \rangle$ replies with $sign-share$ and produce a $threshold-signature$ message. 

\subparagraph*{Construction of Prioritized-PB:} The pseudocode of the P-PB protocol is given in Algorithm \ref{PB:P-PB1}. We provide a flow of the protocol below: 

\begin{itemize}
    \item Upon the invocation of a P-PB protocol, a party  creates a message type $SEND$ using the $ID$ and the received value and proof $\langle v, \sigma \rangle$. Then the party multi-casts/broadcasts the created message. (lines 03-06)
    \item Upon receiving a $SEND$ type message from a party $p_k$, a party checks whether the sender is a selected party for the particular $view$. Then the party checks the value and proof. The checking depends on the view and the step of the P-PB.
    \item  Upon receiving a $sign-share$ $\sigma_k$ from a party $p_k$, a party checks the authenticity of the message. If the sender is authentic, the party adds the $sign-share$ $\rho$ to its set $\Sigma$. (line 08-09)
    \item A selected sender waits for $\langle n-f \rangle$ valid $sign-shares$ and upon receiving the valid $sign-shares$, the party returns the $threshold-signature$ of the $sign-shares$. (lines 05-06)
\end{itemize}

\begin{algorithm}[hbt!]
\LinesNumbered
\DontPrintSemicolon
\SetAlgoNoEnd
\SetAlgoNoLine

\SetKwProg{LV}{Local variables initialization:}{}{}
\LV{}{
   $\Sigma \leftarrow \{\}$\;
}

\SetKwProg{un}{upon}{ do}{}
\un{$P-PB \langle ID,\langle v, \sigma \rangle\rangle $ invocation}
{
  $multi-cast\langle ID, SEND, \langle v, \sigma \rangle \rangle$ \;
  \textbf{wait} until $|\Sigma| = n-f$\;
  
  \KwRet $threshold-sign\langle \Sigma \rangle$\; 
}
\SetKwProg{un}{upon }{ do}{}

\un{receiving $\langle ID, ACK, \sigma_k \rangle$ from a party $p_{k}$ for the first time}
{
 \uIf{$share-validate (\langle ID, ACK \rangle, k, \sigma_k ) = true$ }{
    $\Sigma \leftarrow {\sigma_k \cup \Sigma}$}
}

\caption{P-PB: Protocol for party  $p_i$}\label{PB:P-PB1}
\end{algorithm}

\subsection{Propose-Suggest}
In eVABA, the Propose-Suggest step ensures the proper dissemination of provable-broadcast messages from committee members. Since only committee members are allowed to broadcast their requests and gather threshold signatures, a party cannot wait for $n-f$ threshold signatures upon receiving a proposal. However, it is essential that $n-f$ commit proofs are distributed to and received by at least $n-f$ parties. During the Suggest step, every party ensures it has at least one commit proof and broadcasts this proof. Consequently, a party waits to receive $n-f$ commit proofs before proceeding to the next step of the protocol.
\subsection{Party Selection}
In the party-selection step, honest parties coordinate to elect a leader using the standard leader election protocol. However, the standard leader election protocol selects a leader from the entire set of $n$ parties, while the proposed protocol requirement is to choose a leader exclusively from the committee members. To achieve this, each party first executes the leader election protocol, which may select any party from ${1, 2, ..., n}$. If the elected leader is a committee member, the party proceeds to deliver that leader's proposal through the view-change step. If the elected leader is not a committee member, the party employs a mapping function to map the elected leader to one of the committee members. The mapping function ensures consistency across all parties, such that the same committee member is selected regardless of the party performing the mapping. Once a valid committee member is identified, the parties deliver that member's proposal through the view-change step.
\paragraph{Mapping to a selected party} The leader election protocol ensures that all parties select the same leader. The leader is identified by a unique $id$. To map the leader to the committee, each party calculates the distance between the elected leader and the set of selected committee members. By choosing the committee member with the smallest distance to the elected leader (the difference of the $ids$, the protocol maintains the core property of leader election (same leader for each party) while ensuring the selected leader belongs to the committee.

\subsection{View-Change}
The purpose of the view-change step is to disseminate the selected party's delivery (the outcome of the chosen leader's proposal-promotion) to every party. The proposal-promotion guarantees that if a committee member delivers a commit, then at least $n-2f$ honest parties deliver lock. Therefore, all honest parties receive this lock in the view-change exchange. Similarly, if an honest party delivers a lock, then $f+1$ honest parties receive this key. A party waits for $n-f$ view-change messages; thus, it can get at least one honest party's delivery among $f+1$ honest parties. The decision is straightforward forward: if a party receives a commit message, then the party decides on the value. If a party receives a lock, it increases its variable to indicate its latest key. Last, if it receives a key, it updates its KEY variable to store the current view and the received key. If a party can not reach a decision, then the party adopts the value v of its key and moves to the next view, where it promotes the updated value together with the proof for the external validation function (EX-PB-VAL) of the Provable-Broadcast.

\section{Security and Efficiency Analysis}
\subsection{Security Analysis}
\paragraph{Prioritized provable-broadcast} The proposed P-PB protocol satisfies the P-PB-integrity, P-PB-validity, P-PB-abandonability and P-PB linear complexity properties from the code. (similar to provable-broadcast)
\newline 
The P-PB protocol satisfies the termination and provability properties from the provable-broadcast.

\begin{lemma} \label{E-VABA: selected}
    Algorithm \ref{P-PB-D} satisfies the selected property.
\end{lemma}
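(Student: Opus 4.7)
The plan is to show that the two ways an honest party could be said to \emph{deliver} a message in P-PB, namely (i) processing a SEND it has received and (ii) having the sender accept a completed threshold signature, both force the sender to lie in the committee for the current view.

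First I would fix the semantics of delivery by inspecting Algorithm~\ref{PB:P-PB1} together with the textual description of the SEND handler. According to the description following the algorithm, an honest party, upon receiving a $\langle ID, SEND, \langle v, \sigma\rangle\rangle$ message from some party $p_k$, first tests whether $p_k$ is a selected committee member for the view encoded in $ID$; only then does it proceed to the EX-PB-VAL check and respond with its sign-share. This gives the receiver-side direction essentially for free: if an honest $p_i$ ever delivers $m$, then the associated SEND passed the committee-membership check, so $m$ was proposed by a selected party.

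Next I would argue the sender-side direction, which is the real content of the lemma. Delivery at the sender requires $|\Sigma|=n-f$ valid sign-shares, and since $n=3f+1$, at least $f+1$ of those must originate from honest parties. By the previous paragraph, each such honest contributor would only have emitted a share after verifying that the sender was a selected committee member for the view. Appealing to the unforgeability property of the threshold signature scheme (stated earlier in the Threshold signature scheme preliminaries), a non-selected Byzantine sender cannot produce the missing honest shares on its own, and authenticated channels prevent it from impersonating a selected party. Hence no non-selected party can ever accumulate $n-f$ validly signed shares, and therefore no honest party can deliver a threshold signature $\sigma$ for such a sender.

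The main obstacle I anticipate is purely expository rather than mathematical: the pseudocode in Algorithm~\ref{PB:P-PB1} abbreviates the committee check that is spelled out in the surrounding prose, so I would need to point explicitly to that description when justifying the first step. Once that check is in place, the remainder is just the standard honest-majority counting argument for $n=3f+1$ combined with unforgeability, so no additional machinery is required.
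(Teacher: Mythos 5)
Your argument is correct and follows essentially the same route as the paper's proof: the paper likewise observes that completing the broadcast requires $n-f$ sign-shares, hence at least $f+1$ from honest parties, and that an honest party only issues a sign-share when the sender is a selected committee member for the view, so only selected parties can complete the promotion. Your additional appeals to unforgeability of the threshold scheme and to authenticated channels simply make explicit assumptions the paper leaves implicit; no new idea is needed.
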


\begin{proof}
   From the provability property of the protocol, a party generates a $threshold-signature$ proof for its broadcast. A party requires at least $ f+1 $ honest parties' $sign-shares$ on the promoted message to generate a $threshold-signature$. From the Algorithm, an honest party replies with a $sign-share$ only if the sender is selected for the view. Therefore, only the selected parties can complete the promotion.
\end{proof}


\begin{theorem}
   The eVABA protocol satisfies the Agreement, External-Validity, and Liveness/Termination properties of the VABA protocol, given that the underlying Committee selection, provable-broadcast, and the leader-election protocols are secure. 
\end{theorem}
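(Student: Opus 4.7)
The plan is to prove the three properties in order of increasing difficulty, appealing throughout to the committee-selection guarantee (an honest committee member exists with overwhelming probability), to PB-Validity, PB-Provability, and PB-Termination of the prioritized provable-broadcast, and to the unpredictability of the leader-election output.

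I would first dispose of \textbf{External-Validity} by an inductive argument along the four-step promotion chain. Every commit proof delivered in a view originates from a P-PB whose SEND message passed the EX-PB-VAL predicate at each honest responder, so by PB-Validity the underlying value $v$ is externally valid. Since an honest party decides only when it receives a commit proof in the view-change step, and since any value re-proposed in a later view is accompanied by its key/lock proof that recursively satisfies EX-PB-VAL, every decided value is externally valid.

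For \textbf{Agreement} I would use the standard VABA locked-value / quorum-intersection argument, adapted to the committee setting. Suppose an honest party decides $v$ in view $w$ because it obtained a valid commit proof. By PB-Provability that commit proof was assembled from $n-f$ ACK sign-shares, so at least $n-2f$ honest parties delivered a lock on $v$ in view $w$. Any other honest party waits for $n-f$ view-change messages, of which at least $n-2f$ are from honest parties, and quorum intersection forces at least one to carry a lock on $v$ from view $w$. The adopt-highest-lock rule then prevents any honest party from carrying a different value into view $w+1$, and an easy induction on subsequent views shows that no honest decision can conflict with $v$.

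For \textbf{Liveness/Termination} the plan is to combine three facts: (i) by committee selection, at least one committee member $p^{*}$ is honest, and by PB-Termination $p^{*}$ holds a commit proof on its own value at the end of the proposal-promotion; (ii) by Propose-Suggest, $n-f$ commit proofs reach $n-f$ parties, so no honest party stalls at the suggestion wait; (iii) leader election returns an id that is pseudo-random and unrevealed to the adversary before the relevant P-PBs settle, and the deterministic mapping to the nearest committee member induces a partition of $\{1,\dots,n\}$ that is fixed once the committee is fixed. Because the leader id is uniform over $\{1,\dots,n\}$ and the mapping partition is fixed in advance, every committee member, including $p^{*}$, is selected with probability at least $1/\kappa$. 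With that probability all honest parties obtain $p^{*}$'s commit proof in view-change and decide; otherwise the commit/lock/key chain ensures every non-deciding honest party enters the next view with a strictly stronger adopted value satisfying EX-PB-VAL, so the argument repeats and termination occurs in expected $O(\kappa)$ views.

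The main obstacle will be the liveness step, and in particular the claim that the nearest-committee-member mapping does not give the adversary a way to bias the election against $p^{*}$. I would make this precise by noting that the committee $C$ is chosen and fixed before the coin-tossing output that drives leader election is produced; the mapping from $\{1,\dots,n\}$ to $C$ is then a fixed surjection whose preimages cover $\{1,\dots,n\}$, so the probability that the mapped leader equals $p^{*}$ is exactly $|f^{-1}(p^{*})|/n \geq 1/\kappa$ (taking the smallest nonempty preimage). Combined with the unforgeability of coin shares, this yields the required constant-per-view success probability and hence termination.
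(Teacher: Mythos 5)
First, note that the paper itself offers no proof of this theorem: it is stated bare, with only Lemma~4.1 (the ``selected'' property of P-PB) proved and an informal appeal to the security of the committee-selection, provable-broadcast, and leader-election building blocks. So there is no paper proof to match your outline against; your External-Validity and Agreement sketches are the standard VABA arguments (EX-PB-VAL checked at every honest responder along the four-step chain; a commit built from $n-f$ shares implies $n-2f\ge f+1$ honest lock-holders, and $(n-f)+(n-2f)>n$ forces every view-change quorum to see that lock), and they transfer to the committee setting essentially unchanged, which is consistent with the paper's informal remarks in the view-change subsection.

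The genuine gap is in your Liveness/Termination step, precisely where the paper is silent. Your key quantitative claim, that the nearest-committee-member mapping satisfies $|f^{-1}(p^{*})|/n \geq 1/\kappa$ ``taking the smallest nonempty preimage,'' is false as a deterministic bound: if the committee ids happen to be adjacent (e.g.\ $C=\{5,6,\dots,5+\kappa-1\}$), the interior member $p^{*}$ has preimage exactly $\{p^{*}\}$, so the mapped leader equals $p^{*}$ with probability only $1/n$. Since in the worst case the adversary lets no Byzantine committee member complete its promotion, the per-view success probability you can actually guarantee from this argument is $\Omega(1/n)$, not $\Omega(1/\kappa)$, giving expected $O(n)$ views rather than your claimed $O(\kappa)$ --- and neither matches the paper's (unproved) claim of expected constant rounds; a correct constant-probability argument would have to exploit the randomness of the committee itself (expected preimage mass of honest members) or a balanced mapping, and would need to handle the adversary's ability to observe $C$ before deciding which Byzantine members broadcast. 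Separately, your fallback clause that a non-deciding party enters the next view ``with a strictly stronger adopted value'' is not a property the protocol provides and is not needed; progress across views comes only from the election probability, so the expected-round bound must rest entirely on the (currently flawed) per-view probability estimate.
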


\subsection{Efficiency Analysis}

First, let us briefly go through the steps of the eVABA protocol. From the protocol overview, the message exchange happens when the parties select $\kappa$ parties using the committee selection protocol, the selected parties promote their request using the $proposal-promotion$ sub-protocol, propose and suggest a completed promotion using the $propose-suggest$ step, elect a leader, and deliver the value and proof using view-change.

The proposed protocol ensures the termination in the view change step. Since the protocol exits after the constant rounds of messages, the expected \textit{time complexity} of the protocol is $O(1)$. The main message and communication cost of the eVABA protocol comes from the four-step $proposal-promotion$ sub-protocol invocations. In each step of the $proposal-promotion$ sub-protocol, in the first round, $\kappa$ parties broadcast the messages to $n$ parties, and in the second round, $n$ parties reply to $\kappa$ parties. So the total number of messages are $n*\kappa$, which reduces the total number of messages and the related computations from $n^2$ $n\kappa$. The $suggestion$, $leader-election$ and $view-change$ steps exchange three messages, and the communication is all-to-all, so the message complexity is $O(n^2)$ and the communication complexity is $O(n^2(L+K))$. Both are optimal. Here, $L$ is the bit length of the message and the proof; $K$ is the bit length of digital signatures.

\section{Conclusion}

This paper introduced an optimized VABA protocol that minimizes the number of communication messages between the parties via a committee-based approach. By dynamically selecting $\kappa$ parties to broadcast, the protocol maintains optimal resilience and achieves optimal message complexity of \(O(n^2)\) and expected constant asynchronous rounds. The introduction of a committee to reduce the number of broadcasts further enhances the effectiveness of the protocol, making the protocol ideal for fault-tolerant state machine replication protocols. Future work will explore protocol optimizations and empirical studies to prove the protocol's robustness in diverse settings.



\bibliography{references}

\begin{thebibliography}{1}

\bibitem{HONEYBADGER01}
{Andrew Miller, Yu Xia, Kyle Croman, Elaine Shi, and Dawn Song}.
\newblock {The honey badger of BFT protocols.}
\newblock In {\em {Proceedings of the 2016 ACM SIGSAC Conference on Computer and Communication Security, CCS'16}}, {New York, NY, USA}, {2016}. {ACM Press}.

\bibitem{FASTERDUMBO}
{Bingyong Guo, Zhenliang Lu, Qiang Tang, Jing Xu and Zhenfeng Zhang}.
\newblock Dumbo: Faster asynchronous bft protocols.
\newblock In {\em Proceedings of the ACM Conference on Computer and Communications Security}, page 803–818, 2020.
\newblock URL: \url{"https://doi.org/10.1145/3372297.3417262"}.

\bibitem{BYZ22}
{Leslie Lamport, Robert Shostak, and Marshall Pease}.
\newblock {The Byzantine generals problem.}
\newblock In {\em {ACM Transactions on Programming Languages and Systems (TOPLAS)}}, page 382–401. ACM, 1982.

\bibitem{BYZ23}
{Marshall Pease, Robert Shostak, and Leslie Lamport.}
\newblock {Reaching agreement in the presence of faults.}
\newblock In {\em ACM}, page 228–234, 1980.

\bibitem{BITCOIN01}
Satoshi Nakamoto.
\newblock A peer-to-peer electronic cash system, 2008.
\newblock URL: \url{http://bitcon.org/bitcoin.pdf}.

\bibitem{OHBBFT}
{Nasit S Sony, Xianzhong Ding, and Mukesh Singhal}.
\newblock {Optimizing Communication in Byzantine Agreement Protocols with Slim-HBBFT}.
\newblock In {\em Euro-Par}, 2024.
\newblock URL: \url{"https://easychair.org/publications/preprint/Zs8Mw/open"}.

\bibitem{PMVBA}
{Nasit S Sony, Xianzhong Ding, and Mukesh Singhal}.
\newblock {Prioritized-MVBA: A New Approach to Design an Optimal Asynchronous Byzantine Agreement Protocol}.
\newblock In {\em arXiv preprint}, June 2024.
\newblock URL: \url{"arXiv preprint arXiv:2406.03739"}.

\bibitem{cMVBA}
Nasit~S Sony, Xianzhong Ding, and Mukesh Singhal.
\newblock A committee based optimal asynchronous byzantine agreement protocol wp 1.
\newblock {\em arXiv preprint arXiv:2410.23477}, 2024.

\bibitem{SlimABC}
Nasit~S Sony, Xianzhong Ding, and Mukesh Singhal.
\newblock {Slim-ABC: An Optimized Atomic Broadcast Protocol}.
\newblock {\em arXiv preprint arXiv:2410.04268}, 2024.

\end{thebibliography}

\appendix

\end{document}